\begin{document}

\title{Efficiency of higher dimensional Hilbert spaces for the violation of Bell inequalities}
\author{K\'aroly F.~P\'al}
\email{kfpal@atomki.hu}
\author{Tam\'as V\'ertesi}
\affiliation{Institute of Nuclear Research of the Hungarian Academy of Sciences\\
H-4001 Debrecen, P.O.~Box 51, Hungary}

\def\CC{\mathbb{C}}
\def\RR{\mathbb{R}}
\def\one{\leavevmode\hbox{\small1\normalsize\kern-.33em1}}
\newcommand*{\tr}{\mathsf{Tr}}
\newcommand{\diag}{\mathop{\mathrm{diag}}}
\newtheorem{theorem}{Theorem}[section]
\newtheorem{lemma}[theorem]{Lemma}

\date{\today}

\begin{abstract}
We have determined numerically the maximum quantum violation of over 100
tight bipartite Bell inequalities with two-outcome measurements by each party
on systems of up to four dimensional Hilbert spaces. We have found several cases,
including the ones when each party has only four measurement choices,
when two dimensional systems, i.e., qubits are not sufficient to achieve
maximum violation. In a significant proportion of those cases when qubits are
sufficient, one or both parties have to make trivial, degenerate 'measurements'
in order to achieve maximum violation. The quantum state corresponding to the maximum violation
in most cases is not the maximally entangled one. We also obtain the result,
that bipartite quantum correlations can always be reproduced by measurements
and states which require only real numbers if there is no restriction on the size of
the local Hilbert spaces. Therefore, in order to achieve maximum quantum violation
on any bipartite Bell inequality (with any number of settings and outcomes), there is
no need to consider complex Hilbert spaces.

\end{abstract}

\pacs{03.65.Ud, 03.67.-a}
\maketitle

\section{Introduction}\label{intro}

One of the most astonishing features of quantum-mechanics is its nonlocal nature.
Separated observers sharing an entangled state and performing measurements on them may
induce nonlocal correlations which violate Bell inequalities \cite{Bell64}, \cite{CHSH}.
In contrast, separable states satisfy all the possible Bell inequalities with any
measurement settings.

A general setting concerning Bell inequalities is that measurements are made on a
system, which is decomposed into $N$ subsystems. On each of these subsystems one out of
$m_i,\; i=1,\ldots,N$ observables is measured, producing $k_i,\; i=1,\ldots,N$ outcomes
each. In almost all the cases investigated up to now in order to maximally violate them
the dimension of the local state spaces of the shared entangled state did not have to be
larger than the number of outcomes of the respective parties. Some notable exceptions to
it are the bipartite $k_A=3$ and $k_B=2$ Bell inequalities in Ref.~\cite{BG03}, and
families of correlation Bell inequalities with binary outcomes \cite{VP07}, where the
smallest number of measurement settings was found to be $m_A=8$ and $m_B=4$. This latter
case requires states of dimension larger than the number of outcomes to obtain maximal violation.

In the present numerical investigation our aim is two-fold. Firstly, we wish to
demonstrate that by including marginal probabilities in the Bell inequalities it is
further possible to reduce the number of measurement settings. Then we also show that
any bipartite Bell inequality can be violated with settings and states in the real
Hilbert space in the same extent as with settings and states in the complex
Hilbert space.

Actually, we believe that these results are not only of academic interest:
On one hand, higher dimensional systems have been produced in the laboratory in a number of schemes,
subjected to Bell-type tests as well. In particular in Ref.~\cite{Howell} the experimental violation
of a spin-1 Bell inequality has been presented using four-photon states, while in Refs.~\cite{Vaziri},
\cite{Thew} Bell-type tests based on the inequality of Collins et al. \cite{CGLMP} have been performed
for orbital angular momentum and energy-time entangled photons producing qutrits, respectively.
Also, two-photon interference experiments have demonstrated time-bin entanglement up to $d=20$ dimensionality \cite{Riedmatten}.
On the other hand, this investigation can be especially relevant in practical
applications of quantum information protocols. For instance, in quantum cryptography
\cite{Ekert} the key idea is that only local correlations can be created by an
eavesdropper, thus the only useful correlations must have quantum origin. In order
to characterize the set of possible quantum correlations useful for quantum cryptography
applications, it is important to know how effective higher dimensional systems are with
respect to qubits.

In particular, in this paper we considered tight bipartite two-outcome Bell inequalities
corresponding to the facets of the convex polytope \cite{Pitowsky} with up to five settings
2-89 of Ref.~\cite{89list}, and the 31 cases with up to four settings considered by Brunner
and Gisin \cite{BrunGis}. We note that there is some overlap between the two lists. We
used projective measurements in all cases, since for binary outcomes it has been
shown \cite{CHTW04} that general POVM measurements are never relevant. The tools used
in the numerical exploration are gathered in Sec.~\ref{method}, then in Sec.~\ref{disc} we give a list of
tables presenting the numbers corresponding to the maximum quantum violations in
cases of real and complex qubits (3-dimensional spaces), and real qutrits, taking into account degenerate measurements
as well. For all but two inequalities we considered such component spaces were sufficient
for maximum violation. In one case complex qutrits, and in one case real ququarts
(4-dimensional spaces) were necessary to achieve the maximum violation.
For both cases the gain was marginal, not much larger than numerical uncertainty.
The numbers obtained are discussed in Sec.~\ref{disc}, and some conclusions are commented in Sec.~\ref{sum}.
Finally, in Appendix~\ref{app} we provide a proof on the equivalence of real and complex
Hilbert spaces in reproducing bipartite quantum correlations if there is no constraint
on the size of the component Hilbert spaces.


\section{The method}\label{method}
The quantum value of the expression in the Bell inequality is an
expectation value of a Hermitian operator. The maximum expectation value
of such an operator is its largest eigenvalue. Therefore, to find the
maximum quantum violation we have to find those measurement operators
for both Alice and Bob whose combination as it appears in the inequality
gives the largest
possible eigenvalue \cite{FS04}. This way the parameters to be optimized are
those of the measurement operators, no parameter of the vector enters
the problem. The vector can be determined as the eigenvector belonging
to the maximum eigenvalue.

As the outcome of each measurement has to be either 0
or 1, the measurement operators to be considered are projectors in the
component Hilbert spaces of Alice and Bob. In case of 2-dimensional
Hilbert spaces each nondegenerate measurement operator projects to a 1-dimensional
subspace, which may be defined by a unit vector $|m\rangle$ of irrelevant
phase as $|m\rangle\langle m|$. Such a vector can be characterized by
2 parameters, it is convenient to use the two angles on the Bloch-sphere.
As it turned out to be essential, we also considered trivial, degenerate measurement
operators as well. Such a measurement, represented by the zero and the unit operator
brings always the result 0 and 1, respectively.
Obviously, these measurements need not be performed at all, and
the problem becomes equivalent with a smaller one with less measurements.
We performed the optimization with all combinations of nondegenerate, zero and
unit operators.
For 3-dimensional spaces a nondegenerate measurement operator is either
a one or a two-dimensional projector. A unit vector of irrelevant phase
is again sufficient to define either a one and a two-dimensional
projector as $|m\rangle\langle m|$ and $I-|m\rangle\langle m|$,
respectively.  Four real parameters, for example the two polar angles
and the phases of two components (one component may be chosen real)
are needed to characterize such a 3-dimensional complex vector.
Although we have considered only nondegenerate operators, as each of them
may be either a one or a two dimensional projector, many optimization runs
are necessary to cover all combinations.
In the case of 4-dimensional component spaces we confined ourselves
to 2-dimensional projection operators. To make the optimization of the many
parameters involved for all combinations of the dimensions of the
operators would have taken too much computer time. A 2-dimensional
projector in a 4-dimensional complex space requires 8 real parameters to
define.

We may reduce the number of parameters involved by using the
fact that both Alice and Bob may choose their bases freely. With an
appropriate unitary operation we may transform one of the operators, say the
first one, into a diagonal form. This eliminates all parameters of that
operator. Then we may apply another unitary operator that does not
affect the matrix of the first operator to simplify the matrix of the
second operator as much as possible. If there exists further transformation
that leaves the first two matrices unchanged, it may be used to
reduce the number of parameters of the third operator, and so on.
Following this recipe, for qubit spaces the vector
characterizing the first (nondegenerate) operator will be one of the basis vectors
(no parameter), while the one corresponding to the second operator
may be transformed to have both components real (1 parameter).

In a 3-dimensional Hilbert space the components of a unit vector
may be parameterized as $(\cos\varphi\sin\vartheta e^{i\alpha},
\sin\varphi\sin\vartheta e^{i\beta},\cos\vartheta)$, with the 3rd
component is chosen real (4 parameters). The vector corresponding
to the first measurement operator may be transformed to $(0,0,1)$
(no parameter). This form is invariant to a unitary transformation of
the $u_{12}$ type (operation within the subspace spanned by the first
two basis vectors). With such an operation we may eliminate the second
component of the second vector, and we also make its first component
real, leaving the form $(\sin\vartheta_2,0,\cos\vartheta_2)$ (1 parameter).
After this we still have the freedom to eliminate the phase of the
second component of the 3rd vector.

In the case of 4-dimensional
Hilbert spaces, the first measurement operator may be diagonalized
to have the form $diag(1,1,0,0)$.
Then we may apply a further transformation of the form $u_{12}u_{34}$
to simplify the second operator. We can obviously diagonalize
the two $2\times2$ blocks in the upper left and the lower right corners.
Then using the fact that the matrix corresponds to a 2-dimensional
projector, it can be shown that the rest of the transformed matrix
must also have a special form, which with a further allowed operation
may be simplified to the two-parameter form of $(\one+{\cal H})/2$, where
$\one$ is the unit matrix, and
\begin{displaymath}
{\cal H}=
\left(\begin{array}{cccc}
{\hphantom{-}}\cos\phi&{\hphantom{.}}0&{\hphantom{-}}\sin\phi&{\hphantom{.}}0\\
{\hphantom{.}}0&{\hphantom{-}}\cos\psi&{\hphantom{.}}0&{\hphantom{-}}\sin\psi\\
{\hphantom{-}}\sin\phi&{\hphantom{.}}0&-\cos\phi&{\hphantom{.}}0\\
{\hphantom{.}}0&{\hphantom{-}}\sin\psi&{\hphantom{.}}0&-\cos\psi\\
\end{array}\right).
\end{displaymath}
This has been shown in Ref.~\cite{TV06}. The first two matrices leave no
further room to simplify the 3rd and any further operators, it will take
8 parameters to characterize each of them. We have chosen those parameters by
using the fact that the matrix of the most general two-dimensional
projector in the 4-dimensional space may be produced by applying
the most general transformation of the form $u_{12}u_{34}$ to the
two-parameter matrix above. Each of the 2-dimensional unitary operators
$u_{12}$ and $u_{34}$ have 4 parameters. However, an overall phase
is irrelevant, and it also turns out that the effect of the transformation
to the special form will only depend on the difference of two phase
angles in the operators, which makes it possible to eliminate one more
parameter, leaving altogether just the necessary number of $2+(2\cdot 4-2)=8$
parameters.

We determined the maximum violation with both complex and real Hilbert spaces.
A measurement operator in the real space needs just half as many real parameters
to characterize as in a complex space of the same number of dimensions.
The parameters we used were the same as in the complex space with all
phase angles taken to be zero. For optimization we applied an uphill simplex method \cite{NM65}.
As such a method climbs to a local maximum, to find the global one we
restarted the method from random positions many times, at least $10000$ times
for the $4\times 4$ dimensional Hilbert spaces. We still can not
be sure that we have found all global optima, especially for the largest, the
5522 (5 settings of 2-outcome measurements for each of the two parties)
cases. Nevertheless, the results calculated with spaces of different
dimensions are fully consistent with each other.
Either with complex or real spaces, a higher dimensional
calculation has always given at least as large violation as the lower
dimensional ones. When we managed to find a larger value, some optimization
runs still ended up with the lower dimensional result.
From properties of the optimum in the higher
dimensional case, namely the number of terms in the
Schmidt decomposition of the eigenvector and the relation of the subspace
defined by the Schmidt decomposition to the measurement operators
may reveal if it actually corresponds to a lower dimensional case. The
4-dimensional calculations can and do reproduce all lower dimensional
results we considered, including the 2-dimensional cases with degenerate
operators. When the Schmidt decomposition shows that the eigenvector
occupies only 2-dimensional subspaces of Alice and Bob's component spaces,
and there are measurement operators that project to exactly those subspaces,
or to their complementer space, then those measurements for the eigenstate
do behave like degenerate ones. Actually, we realized from such analysis that
in most cases when we found a larger violation with ququarts than with
qubits, the higher dimensionality was not essential, just degenerate operators
had to be considered. In their recent paper Brunner and Gisin also
concluded that for one of their cases they needed degenerate
\cite{BrunGis} measurements.
The 4-dimensional calculation reproduces the 3-dimensional results too,
and may even reveal, which measurement operators should be one, and which ones
should be two-dimensional projectors for maximum violation.

 \begin{table}[tbm]
 \caption{Maximum quantum violation of Bell inequalities calculated with real qubit component spaces, with nondegenerate measurements. Higher dimensional spaces have given no larger violation for these cases. Entries when maximum violation is achieved by the maximally entangled state are marked by stars.}
 \vskip 0.2truecm
 \centering
 \begin{tabular}{l c c c l c c}
 \hline\hline
 Case&Type&Qubit~(R)&\quad\quad&Case&Type&
 Qubit~(R)\\ [0.5ex]
 \hline
 ${\rm CHSH} (A_2)$&2222&0.207107 *&\quad\quad&$A_{27}$&5522&0.648307\hphantom{ *}\\
 $I_{3322} (A_3)$&3322&0.250000 *&\quad\quad&$A_{28}$&5522&0.640314 *\\
 $I^3_{4322}$&4322&0.436492 *&\quad\quad&$A_{30}$&5522&0.569821\hphantom{ *}\\
 $I^2_{4422}$&4422&0.621371\hphantom{ *}&\quad\quad&$A_{31}$&5522&0.573817\hphantom{ *}\\
 $A_5$&4422&0.435334\hphantom{ *}&\quad\quad&$A_{35}$&5522&0.624908\hphantom{ *}\\
 $AS_1$&4422&0.541241 *&\quad\quad&$A_{40}$&5522&0.607864\hphantom{ *}\\
 $AS_2$&4422&0.878493 *&\quad\quad&$A_{42}$&5522&0.619865\hphantom{ *}\\
 $AII_1$&4422&0.605554\hphantom{ *}&\quad\quad&$A_{43}$&5522&0.610765\hphantom{ *}\\
 $AII_2$&4422&0.500000 *&\quad\quad&$A_{51}$&5522&0.660781\hphantom{ *}\\
 $I^5_{4422}$&4422&0.436492 *&\quad\quad&$A_{52}$&5522&0.621861\hphantom{ *}\\
 $I^9_{4422}$&4422&0.461684\hphantom{ *}&\quad\quad&$A_{53}$&5522&0.638610\hphantom{ *}\\
 $I^{10}_{4422}$&4422&0.613946\hphantom{ *}&\quad\quad&$A_{54}$&5522&0.593681\hphantom{ *}\\
 $I^{11}_{4422}$&4422&0.638354\hphantom{ *}&\quad\quad&$A_{57}$&5522&0.660344\hphantom{ *}\\
 $I^{12}_{4422}$&4422&0.618814\hphantom{ *}&\quad\quad&$A_{58}$&5522&0.648890\hphantom{ *}\\
 $I^{17}_{4422}$&4422&0.671409\hphantom{ *}&\quad\quad&$A_{72}$&5522&0.696282\hphantom{ *}\\
 $A_{10}$&5422&0.415390\hphantom{ *}&\quad\quad&$A_{74}$&5522&0.689069\hphantom{ *}\\
 $A_{22}$&5422&0.623457\hphantom{ *}&\quad\quad&$A_{77}$&5522&0.665558\hphantom{ *}\\
 $A_{24}$&5522&0.604799\hphantom{ *}&\quad\quad&$A_{78}$&5522&0.892702\hphantom{ *}\\
 $A_{25}$&5522&0.603379\hphantom{ *}\\
 \hline
 \end{tabular}
 \label{table:realqubit}
 \end{table}

\begin{table}[tbm]
 \caption{Maximum quantum violation is reached with complex
 qubits, no degenerate measurements.}
 \vskip 0.2truecm
 \centering
 \begin{tabular}{l c c c}
 \hline\hline
 Case&Type&Qubit~(R)&Qubit~(C)\\ [0.5ex]
 \hline
 $I^6_{4422}$&4422&0.414214 *&0.449490 *\\
 $I^7_{4422}$&4422&0.441730\hphantom{ *}&0.454837\hphantom{ *}\\
 $A_8$&5422&0.555704 *&0.591650 *\\
 $A_9$&5422&0.451695\hphantom{ *}&0.465243\hphantom{ *}\\
 $A_{11}$&5422&0.445211\hphantom{ *}&0.456108\hphantom{ *}\\
 $A_{12}$&5422&0.452098\hphantom{ *}&0.487709\hphantom{ *}\\
 $A_{15}$&5422&0.447760\hphantom{ *}&0.449628\hphantom{ *}\\
 $A_{19}$&5422&0.588932\hphantom{ *}&0.622630\hphantom{ *}\\
 $A_{20}$&5422&0.564956\hphantom{ *}&0.602240\hphantom{ *}\\
 $A_{23}$&5522&0.528521\hphantom{ *}&0.546073\hphantom{ *}\\
 $A_{26}$&5522&0.486495\hphantom{ *}&0.527555\hphantom{ *}\\
 $A_{29}$&5522&0.456259\hphantom{ *}&0.492064\hphantom{ *}\\
 $A_{32}$&5522&0.396861\hphantom{ *}&0.413553\hphantom{ *}\\
 $A_{33}$&5522&0.561909\hphantom{ *}&0.622631\hphantom{ *}\\
 $A_{36}$&5522&0.419088\hphantom{ *}&0.438868\hphantom{ *}\\
 $A_{37}$&5522&0.456106\hphantom{ *}&0.486887\hphantom{ *}\\
 $A_{38}$&5522&0.428958\hphantom{ *}&0.469913\hphantom{ *}\\
 $A_{39}$&5522&0.612269\hphantom{ *}&0.617203\hphantom{ *}\\
 $A_{41}$&5522&0.419234\hphantom{ *}&0.478563\hphantom{ *}\\
 $A_{47}$&5522&0.402679\hphantom{ *}&0.460854\hphantom{ *}\\
 $A_{48}$&5522&0.431439\hphantom{ *}&0.454841\hphantom{ *}\\
 $A_{49}$&5522&0.454198\hphantom{ *}&0.466694\hphantom{ *}\\
 $A_{50}$&5522&0.500000 *&0.518290\hphantom{ *}\\
 $A_{79}$&5522&0.606128\hphantom{ *}&0.624315\hphantom{ *}\\
 $A_{81}$&5522&0.662368\hphantom{ *}&0.669010\hphantom{ *}\\
 $A_{83}$&5522&0.696038\hphantom{ *}&0.696166\hphantom{ *}\\
 $A_{85}$&5522&0.610060\hphantom{ *}&0.641141\hphantom{ *}\\
 $A_{86}$&5522&0.780438\hphantom{ *}&0.800443\hphantom{ *}\\
 \hline
 \end{tabular}
 \label{table:complexqubit}
 \end{table}

 \begin{table}[tbm]
 \caption{Maximum quantum violation is reached with real
 qubits, with some measurement operators degenerate.}
 \vskip 0.2truecm
 \centering
 \begin{tabular}{l c c c c}
 \hline\hline
 Case&Type&Qubit~(R)&Qubit~(C)&Qubit~(R)\\
 &&nondeg.&nondeg.&deg.~op.\\ [0.5ex]
 \hline
 $I^1_{4322}$&4322&0.154701\hphantom{ *}&0.236068\hphantom{ *}&0.414214 *\\
 $I^2_{4322} (A_4)$&4322&0.231613\hphantom{ *}&0.259587\hphantom{ *}&0.299038 *\\
 $A_6$&4422&0.222941\hphantom{ *}&0.232051 *&0.299038 *\\
 $I^3_{4422}$&4422&0.238042\hphantom{ *}&0.238042\hphantom{ *}&0.414214 *\\
 $I^4_{4422}$&4422&0.055979\hphantom{ *}&0.055979\hphantom{ *}&0.414214 *\\
 $I^{13}_{4422}$&4422&0.249466\hphantom{ *}&0.250000 *&0.434855\hphantom{ *}\\
 $I^{14}_{4422}$&4422&0.407621\hphantom{ *}&0.410296\hphantom{ *}&0.479410 *\\
 $I^{15}_{4422}$&4422&0.238273\hphantom{ *}&0.250000 *&0.434855\hphantom{ *}\\
 $I^{16}_{4422}$&4422&0.240659\hphantom{ *}&0.240659\hphantom{ *}&0.414214 *\\
 $A_{17}$&5422&0.221946\hphantom{ *}&0.221946\hphantom{ *}&0.375447\hphantom{ *}\\
 $A_{18}$&5422&0.210377\hphantom{ *}&0.212229\hphantom{ *}&0.384355\hphantom{ *}\\
 $A_{34}$&5522&0.461083\hphantom{ *}&0.513972\hphantom{ *}&0.535012\hphantom{ *}\\
 $A_{44}$&5522&0.500000 *&0.533925\hphantom{ *}&0.536494\hphantom{ *}\\
 $A_{55}$&5522&0.451941\hphantom{ *}&0.486823\hphantom{ *}&0.621320 *\\
 $A_{56}$&5522&0.675426 *&0.675426 *&0.689312 *\\
 $A_{59}$&5522&0.430220\hphantom{ *}&0.430220\hphantom{ *}&0.448826\hphantom{ *}\\
 $A_{63}$&5522&0.327627\hphantom{ *}&0.327627\hphantom{ *}&0.479410 *\\
 $A_{69}$&5522&0.330388\hphantom{ *}&0.330388\hphantom{ *}&0.609610\hphantom{ *}\\
 $A_{70}$&5522&0.465198\hphantom{ *}&0.465198\hphantom{ *}&0.605223\hphantom{ *}\\
 $A_{71}$&5522&0.418729\hphantom{ *}&0.418729\hphantom{ *}&0.449016\hphantom{ *}\\
 $A_{73}$&5522&0.800326\hphantom{ *}&0.852797\hphantom{ *}&0.883138\hphantom{ *}\\
 $A_{75}$&5522&0.572736\hphantom{ *}&0.587052\hphantom{ *}&0.605151\hphantom{ *}\\
 $A_{80}$&5522&0.136376\hphantom{ *}&0.174354\hphantom{ *}&0.375447\hphantom{ *}\\
 $A_{82}$&5522&0.314943\hphantom{ *}&0.314943\hphantom{ *}&0.454573\hphantom{ *}\\
 $A_{84}$&5522&0.605340\hphantom{ *}&0.619437\hphantom{ *}&0.623457\hphantom{ *}\\
 $A_{88}$&5522&0.076842\hphantom{ *}&0.076842\hphantom{ *}&0.414214 *\\
 \hline
 \end{tabular}
 \label{table:realqubitd}
 \end{table}

 \begin{table}[tbm]
 \caption{Maximum quantum violation is reached with complex
 qubits, with some measurement operators degenerate.}
 \vskip 0.2truecm
 \centering
 \begin{tabular}{l c c c c c}
 \hline\hline
 Case&Type&Qubit~(R)&Qubit~(C)&Qubit~(R)&Qubit~(C)\\
 &&nondeg.&nondeg.&deg.~op.&deg.~op.\\ [0.5ex]
 \hline
 $A_{16}$&5422&0.416036\hphantom{ *}&0.416036\hphantom{ *}&0.446167\hphantom{ *}&0.457107 *\\
 $A_{45}$&5522&0.482065\hphantom{ *}&0.509936\hphantom{ *}&0.534037\hphantom{ *}&0.537239\hphantom{ *}\\
 $A_{61}$&5522&0.307654\hphantom{ *}&0.307654\hphantom{ *}&0.395168\hphantom{ *}&0.401925\hphantom{ *}\\
 $A_{62}$&5522&0.219048\hphantom{ *}&0.231812\hphantom{ *}&0.395168\hphantom{ *}&0.401925\hphantom{ *}\\
 $A_{66}$&5522&0.345116\hphantom{ *}&0.360817\hphantom{ *}&0.452098\hphantom{ *}&0.487709\hphantom{ *}\\
 \hline
 \end{tabular}
 \label{table:complexqubitd}
 \end{table}

 \begin{table*}[tbm]
 \caption{Maximum quantum violation is reached with real qutrits.}
 \vskip 0.2truecm
 \centering
 \begin{tabular}{l c c c c c c}
 \hline\hline
 Case&Type&Qubit~(R)&Qubit~(C)&Qubit~(R)&Qubit~(C)&Qutrit~(R)\\
 &&nondeg.&nondeg.&deg.~op.&deg.~op.&\\ [0.5ex]
 \hline
 $I^1_{4422} (A_7)$&4422&0.197048\hphantom{ *}&0.197048\hphantom{ *}&0.250000 *&0.250000 *&0.287868\hphantom{ *}\\
 $I^8_{4422}$&4422&0.420651\hphantom{ *}&0.420651\hphantom{ *}&0.484313 *&0.484313 *&0.487768\hphantom{ *}\\
 $I^{18}_{4422}$&4422&0.181236\hphantom{ *}&0.181236\hphantom{ *}&0.543599\hphantom{ *}&0.543599\hphantom{ *}&0.642967\hphantom{ *}\\
 $I^{19}_{4422}$&4422&0.369700\hphantom{ *}&0.430724 *&0.443587\hphantom{ *}&0.443587\hphantom{ *}&0.497171\hphantom{ *}\\
 $I^{20}_{4422}$&4422&0.305645\hphantom{ *}&0.305645\hphantom{ *}&0.434324\hphantom{ *}&0.434324\hphantom{ *}&0.449669\hphantom{ *}\\
 $A_{13}$&5422&0.397412\hphantom{ *}&0.403098\hphantom{ *}&0.414214 *&0.414214 *&0.419982\hphantom{ *}\\
 $A_{14}$&5422&0.449958\hphantom{ *}&0.453901\hphantom{ *}&0.452465\hphantom{ *}&---&0.464584\hphantom{ *}\\
 $A_{46}$&5522&0.446602\hphantom{ *}&0.449849\hphantom{ *}&---&---&0.458105\hphantom{ *}\\
 $A_{60}$&5522&0.252968\hphantom{ *}&0.252968\hphantom{ *}&0.375447\hphantom{ *}&0.375447\hphantom{ *}&0.390611\hphantom{ *}\\
 $A_{64}$&5522&0.375234\hphantom{ *}&0.375234\hphantom{ *}&0.375447\hphantom{ *}&0.375447\hphantom{ *}&0.390089\hphantom{ *}\\
 $A_{65}$&5522&0.208545\hphantom{ *}&0.208545\hphantom{ *}&0.347759 *&0.353146\hphantom{ *}&0.355021\hphantom{ *}\\
 $A_{67}$&5522&0.395696\hphantom{ *}&0.395696\hphantom{ *}&---&---&0.396289\hphantom{ *}\\
 $A_{68}$&5522&0.385731\hphantom{ *}&0.385731\hphantom{ *}&---&---&0.395718\hphantom{ *}\\
 $A_{76}$&5522&0.404741\hphantom{ *}&0.415397\hphantom{ *}&0.447555\hphantom{ *}&0.447555\hphantom{ *}&0.489863\hphantom{ *}\\
 $A_{89}$&5522&0.131420\hphantom{ *}&0.131420\hphantom{ *}&0.250000 *&0.250000 *&0.288932\hphantom{ *}\\
 \hline
 \end{tabular}
 \label{table:realqutrit}
 \end{table*}

\section{Discussion of the results}\label{disc}

We calculated the maximum violation of the tight bipartite Bell inequalities
$A_2-A_{89}$ listed in Ref.~\cite{89list} ($A_1$ is a trivial 1122 type, which
can not be violated).
These inequalities are the part involving at most 5 measurement settings per party
of a huge list of inequalities obtained with the method described in
Ref.~\cite{89listgen}. We also included the 31 known tight inequalities with
up to 4 measurement settings per party considered recently by Brunner and
Gisin \cite{BrunGis}. We adopted the notation used in that paper.
Out of the 26 inequalities of 4422 type, 20 was newly introduced there, while
$I^1_{4422}$ was presented in \cite{14422}, $I^2_{4422}$ in \cite{24422},
$A_5$, $A_6$, $AII_1$ and $AII_2$ in \cite{A5A6AII12}, while $AS_1$ and $AS_2$
in \cite{AS12}. The only 2222 one is the Clauser-Horne-Shimony-Holt (CHSH) inequality
\cite{CHSH}. The Bell inequality found in Ref.~\cite{PS01} is the only 3322 type, and
the three 4322 cases were introduced in Ref.~\cite{14422}.
The two lists we considered have some overlap, we marked those cases in our
tables. For every inequality in the lists the classical value to be violated is
zero, except for $I^7_{4422}$, where it is one. The maximum violations we
show in the tables are just the maximum eigenvalues we found, except for the
case $I^7_{4422}$, where it is one less.

In Table~\ref{table:realqubit} we listed all those cases for which we could not find
a stronger violation in any of our calculations than the maximum violation we
achieved with real qubits, performing only nondegenerate measurements.
In all tables we marked with a star the cases when maximum violation was
achieved with the maximally entangled state. For most instances this is not so,
which has also been noted in Ref.~\cite{BrunGis}.
Table~\ref{table:complexqubit} contains the inequalities when we got the
maximum violation with measurements on complex qubits. For the cases in these
tables we got the same values for maximum violation with complex qutrits and complex
ququarts than with complex qubits, and real qutrits did as well as
real qubits. However, with real ququarts we could always
achieve the same amount of violation as with complex qubits.
It is generally true that if a bipartite Bell inequality with arbitrary outputs per party
can be violated by a certain amount with projective measurements in $n$-dimensional Hilbert spaces, than they can
be violated by at least as much with projective measurements in $2n$-dimensional
real Hilbert spaces. This property is an immediate outcome of an even more general statement, which is
provided in Appendix~\ref{app}. It is an open question, whether Lemma~\ref{lemma:main} could be somehow
generalized so that this statement would be true for any multipartite Bell inequalities as well.
From the construction it follows, and we have demonstrated in Appendix~\ref{app}, that
the Schmidt-decomposition of the state in the 4-dimensional real space has 4 terms,
the Schmidt-coefficients are pairwise equal, and the ratio of the pairs
equals to the ratio of the Schmidt-factors from the qubit case with the
same violation.

There are surprisingly many inequalities that can be violated more, sometimes
very significantly more by allowing measurements to be degenetate, than by
confining ourselves only to nontrivial ones.
Table~\ref{table:realqubitd} and Table~\ref{table:complexqubitd} show the cases
when we got the maximum violation with real and complex qubits, respectively,
taking one or more measurements of Alice, or Bob, or of both of them degenerate,
i.e., either unity or zero. As we have already mentioned, the four-dimensional
calculations can always reproduce these values even by confining ourselves to
rank 2 measurements (2-dimensional projectors) by operators that project
onto the subspace the eigenvector occupies, or onto the orthogonal one.
However, when a complex qubit result is reproduced with real ququarts,
the eigenvector requires the whole component spaces
(4 terms in Schmidt decomposition),
therefore effect of degenerate operators can not be simulated with rank 2
operators this way.

Brunner and Gisin \cite{BrunGis} calculated the maximum quantum violation
by applying degenerate measurements only for their $I^4_{4422}$ inequality.
They did that after realizing that this inequality can not be violated by
the maximally entangled state without such measurements.
They state $(1/\sqrt{2}-1/2)$ as the value of maximum quantum violation, which they
achieved by taking two measurement operators of both parties degenerate.
We found twice as large maximum violation by taking two measurement operators
of only one party degenerate (see Table~\ref{table:realqubitd}). We also found
that a very small violation may be achieved by using only true two-outcome
measurement. The violating state is far from the maximally entangled state,
it has Schmidt coefficients of 0.9158 and 0.4016.

So far we have only shown cases for which maximum violation could be achieved
in qubit spaces. The existence of Bell inequalities for which this is not the case has been proved
in Refs.~\cite{VP07,Brunner,Perez}. Particularly, in Ref.~\cite{VP07} we were able to give concrete
examples of correlation Bell inequalities (i.e., inequalities without local marginals) whose maximal
violation is not achieved by qubits.
In the present list we found numerically quite a few such cases, now for Bell expressions with marginals.
In all such cases except for two, real qutrit spaces were enough for maximum violation, see
Table~\ref{table:realqutrit}. For most of them, in two dimensions larger violation
can be achieved by allowing degenerate operators than by not allowing them
(no entry in the appropriate place, when it is not so). With qutrits we can do
even better. However, for most entries in the list the increase is quite
small, no more than a couple of percents, sometimes even much less, which
means these cases may have no practical and experimental relevance. For a few cases the gain is
more than 10\%. We find the largest increase (about 0.1, or 18\%) for $I^{18}_{4422}$.
It is interesting to note that there exist Bell
inequalities that can be violated more with real qutrits than with complex
qubits, and there are also examples for the opposite (at least without allowing
degenerate measurements for qutrits, which we have not tried). For all cases in
Table~\ref{table:realqutrit}
each party has at least 4 measurements, in the smallest ones each of them has just
4. We will show in a forthcoming publication that for correlation type inequalities
to get larger violation with higher-dimensional spaces than with qubits, one of the
parties must have at least 4 measurements, and then the other one must have at least
7 measurements. All 4422, 5422 and 6422 correlation type Bell inequalities
can maximally be violated by qubits.

We found one single inequality in the list that we could violate more with complex qutrits than
with real ones or with qubits. The maximum violation of $A_{21}$ (5422) with real qubits
(no degenerate measurement) is 0.099090, with complex qubit (no degenerate measurement) is
0.125000, with real and complex qubit (degenerate measurement allowed) 0.299038
(maximally entangled state), with real qutrit 0.316523, and with complex
qutrit 0.317496. The last improvement is absolutely marginal, but it does not seem
to be due to numerical error.

For $A_{87}$ (5522) we found we need ququarts to get maximum violation, but
the improvement was even less convincing. The best qubit value is 0.756199
(both with real and complex qubit), while the maximum we got with
both real and complex ququarts is 0.756247. From a more detailed
analysis of the solution we could not see a way to reduce it to
a lower dimensional space. It turned out that this violation could be achieved by
taking two measurement operators equal. Therefore, we calculated the maximum
violation with qubits of the 5422 inequality we got by uniting these two measurements,
and we found 0.755931, a slightly smaller value than for the original inequality.
The difference from the ququart value is still extremely small, but at least it
seems to be more than numerical error.

In our calculations the maximum number of dimensions for the component spaces
were four. Moreover, we allowed degenerate measurements only for qubit
spaces, and confined ourselves to rank 2 measurements in four dimensions.
For some cases on the list it is possible, that without these restrictions one
could find a larger maximum quantum violation.

\section{Summary}\label{sum}

Let us briefly summarize the main results achieved in this work.

We investigated numerically the maximum values on tight bipartite two-outcome Bell inequalities in cases when
the local Hilbert space was restricted to $d=2,3,4$ dimensions. We found Bell inequalities with four measurement settings for each side where qutrits were needed to achieve maximal violation, and with five measurement settings for each side where ququarts were needed to achieve maximal violation. We may interpret these results via the concept of witnessing the Hilbert space dimension \cite{Brunner,Perez}. The question is that given a joint probability distribution of measurement results performed by separate parties, is it possible to set a bound on the dimension of the multipartite state space? Thus, dimension witnesses are operators \cite{Brunner}
able of bounding the dimension of a quantum system. This allows one to test experimentally the size of the underlying Hilbert space, which otherwise is a rather abstract concept. Therefore, by adapting this language, we can say that we found numerically tight Bell inequalities which act as dimension witnesses for qubits and qutrits.

On the other hand, in analogy to the terminology {\it dimension} witnesses one may inquire whether {\it reality} witnesses could be constructed, which would be able to distinguish complex Hilbert spaces from real Hilbert spaces. Actually, the existence of such kind of a witness has been quested by Gisin in Ref.~\cite{AS12}. However, according to our result presented in Appendix~\ref{app}, we may safely say that reality witness cannot be constructed for the case of two parties since by doubling the size of the local complex Hilbert space of each party one may reconstruct all the joint probabilities with local real Hilbert spaces as well. Although, the question is remained open for multipartite systems, numerical study supports us to believe that our Lemma holds for the most general case as well.

\acknowledgements
The authors thank Antonio Ac\'in, Nicolas Brunner, and Robert Englman for valuable discussions.

\appendix

\section{On the equivalence of real and complex Hilbert spaces in reproducing
bipartite quantum correlations}\label{app}

Here the following main result is shown:
\begin{lemma}
Joint probabilities between two separated observers which has quantum origin can always be
reproduced by measurements and states which require only real numbers.
\label{lemma:main}
\end{lemma}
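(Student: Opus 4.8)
The plan is to give an explicit \emph{local} encoding that turns any complex bipartite realization into a real one producing exactly the same joint probabilities $p(a,b|x,y)=\langle\psi| A_{a|x}\otimes B_{b|y}|\psi\rangle$, where $A_{a|x}$ and $B_{b|y}$ are the Hermitian measurement operators of Alice and Bob and $|\psi\rangle\in\CC^{d_A}\otimes\CC^{d_B}$. The first step is to separate real and imaginary parts: every Hermitian $M$ decomposes as $M=M_R+iM_I$ with $M_R=M_R^{\mathsf{T}}$ real symmetric and $M_I=-M_I^{\mathsf{T}}$ real antisymmetric, and $|\psi\rangle=|\psi_R\rangle+i|\psi_I\rangle$ with $|\psi_R\rangle,|\psi_I\rangle$ real. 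Since $A_{a|x}\otimes B_{b|y}$ is Hermitian, the probability $\langle\psi|A\otimes B|\psi\rangle$ is real, so it suffices to reproduce its real part, which a short computation casts as
\[
\langle\psi_R|C_R|\psi_R\rangle+\langle\psi_I|C_R|\psi_I\rangle-2\langle\psi_R|C_I|\psi_I\rangle,
\]
where $C_R$ and $C_I$ are the real and imaginary parts of $A\otimes B$.

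The genuine obstacle is \emph{locality}. The textbook embedding $a+ib\mapsto\left(\begin{smallmatrix}a&-b\\ b&a\end{smallmatrix}\right)$ of $\CC^{d}$ into $\RR^{2d}$ does preserve inner products, but applied to the joint space $\CC^{d_A}\otimes\CC^{d_B}$ it introduces a \emph{single, shared} copy of the imaginary unit that neither party can touch locally, so it violates the bipartite tensor structure. To repair this I would give each party its own real ``reality qubit'' $\RR^{2}$ carrying the antisymmetric $\epsilon=\left(\begin{smallmatrix}0&-1\\ 1&0\end{smallmatrix}\right)$, $\epsilon^{2}=-I$, and encode the operators on $\RR^{d_A}\otimes\RR^{2}$ and $\RR^{d_B}\otimes\RR^{2}$ as $\hat A=A_R\otimes I+A_I\otimes\epsilon$ and $\hat B=B_R\otimes I+B_I\otimes\epsilon$. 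One checks at once that $\hat A$ is real symmetric, that $\widehat{A^{2}}=\hat A^{2}$ so projectors map to projectors, and that completeness $\sum_a A_{a|x}=I$ is preserved; the local dimensions double, matching the $2n$ announced in Sec.~\ref{disc}.

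It remains to supply a real shared state making the two local imaginary units act coherently. I would take $|\tilde\psi\rangle=|\psi_R\rangle\otimes|\mu\rangle+|\psi_I\rangle\otimes|\nu\rangle$, where $|\mu\rangle,|\nu\rangle$ form an orthonormal basis of the $(-1)$-eigenspace of $\epsilon\otimes\epsilon$ on the two reality qubits, concretely $|\mu\rangle=\tfrac1{\sqrt2}(|00\rangle-|11\rangle)$ and $|\nu\rangle=\tfrac1{\sqrt2}(|01\rangle+|10\rangle)$, after the obvious relabeling that groups each party's system and ancilla factors together so that $|\tilde\psi\rangle$ is genuinely bipartite across the $A$--$B$ cut and $\hat A\otimes\hat B$ is a true local product. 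The verification is the crux: expanding $\langle\tilde\psi|\hat A\otimes\hat B|\tilde\psi\rangle$ yields four system bilinears weighted by ancilla matrix elements, and one checks that both $I\otimes\epsilon$ and $\epsilon\otimes I$ send $|\mu\rangle\mapsto|\nu\rangle$, $|\nu\rangle\mapsto-|\mu\rangle$ (acting exactly as the imaginary unit on this two-dimensional subspace) while $\epsilon\otimes\epsilon$ acts as $-I$ there. Substituting these, with $\langle\mu|\mu\rangle=\langle\nu|\nu\rangle=1$ and $\langle\mu|\nu\rangle=0$, collapses the expression precisely to $\langle\psi_R|C_R|\psi_R\rangle+\langle\psi_I|C_R|\psi_I\rangle-2\langle\psi_R|C_I|\psi_I\rangle$, i.e. to $p(a,b|x,y)$.

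Finally I would note that the encoding is independent of the number of settings and outcomes, uses one fixed reality-ancilla common to all operators, and extends to mixed states by convexity: encode each pure component and retain the same classical mixture, $\tilde\rho=\sum_k p_k|\tilde\psi_k\rangle\langle\tilde\psi_k|$, so that $\tr[\tilde\rho\,\hat A\otimes\hat B]=\tr[\rho\,A\otimes B]$. Hence every quantum joint probability distribution is reproduced by real states and real measurements, which is the claim. The only nonroutine step is identifying the entangled reality-ancilla that lets the two parties share a single imaginary unit while each acts strictly locally; everything else is bookkeeping of real and imaginary parts.
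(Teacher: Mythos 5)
Your construction is correct and is, up to a local change of basis on Bob's reality ancilla (equivalently, the paper's use of $B^{*}$ rather than $B$), identical to the paper's proof: the block embedding $a+ib\mapsto aI+b\epsilon$ applied entrywise to the operators and to the coefficient matrix $V$ of the state is exactly your $\hat A=A_R\otimes I+A_I\otimes\epsilon$ together with the entangled two-qubit ancilla spanned by $|\mu\rangle,|\nu\rangle$. The paper hides the locality bookkeeping inside the trace formula $\mathrm{Tr}(AVB^{T}V^{\dagger})$ and the $1/\sqrt{2}$-normalized image of $V$, whereas you make the shared imaginary unit explicit, but the underlying map, the dimension doubling, the preservation of projectors and completeness, and the convexity step for mixed states all coincide.
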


This fact which is interesting by its own, has some striking consequences, an immediate one
is that the maximum quantum violation of any bipartite Bell inequality (with any number of
settings and outcomes) can be achieved in the real Hilbert space as well.

To set the scene, we assume that two separated observers, Alice and Bob, may perform one of a finite number of
measurements, and that each measurement has a certain number of outcomes. We label outcomes corresponding to different measurements distinctly, so that each outcome $a$ and $b$ is uniquely associated to a single measurement of Alice and Bob, respectively. Let $S_A$ and $S_B$ be $n$-dimensional complex Hilbert spaces of the two parties, respectively, and $|V\rangle$ be any vector in the tensor product space $S_A\otimes S_B$. Let $P_a$ ($P_b$) be projection operator associated with outcome $a$ ($b$) of $S_A$ ($S_B$).

In the light of the above definitions, we say that the joint probabilities $p_{ab}$
admit a quantum representation \cite{NPA07} if there exists a quantum state $\rho$ on the composite Hilbert
space, a set of projectors $P_{a}\otimes \one$ of Alice's and a set of projectors $\one \otimes P_{b}$ of Bob's system, such that
\begin{equation}
p_{ab}=Tr(P_{a}P_{b}\rho).
\label{joint}
\end{equation}
Note, that since we do not impose any limitation on the dimension of the local Hilbert spaces, we may
consider projection operators instead of the more general POVM measurements. The Bell expression consists of
a linear combination of probabilities~(\ref{joint}). The projectors belonging to different outcomes
of a measurement are orthogonal to each other, and they sum up to unity.

First we prove the following correspondence between joint distributions arising from projection
measurements in complex $n$-dimensional local Hilbert spaces and projection measurements
in real $2n$-dimensional local Hilbert spaces:

\begin{lemma}
There exist projection operators $P'_a$ and $P'_b$ of the $2n$-dimensional
real spaces $S'_A$ and $S'_B$, respectively, and
$|V'\rangle\in S'_A\otimes S'_B$ such that the corresponding
expectation values are equal,
i.e.,
\begin{equation}
\langle V|P_a\otimes P_b|V\rangle=\langle V'|P'_a\otimes P'_b|V'\rangle,
\label{expect}
\end{equation}
where the state $|V\rangle$ and operators $P_a$, $P_b$ are defined above,
and $|V'\rangle$, $P'_a$ and $P'_b$ depend only on $|V\rangle$, $P_a$ and $P_b$,
respectively.

\end{lemma}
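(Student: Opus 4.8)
The plan is to reduce everything to the standard \emph{realification} of a complex Hilbert space, which sends an $n$-dimensional complex space to a $2n$-dimensional real one, and to check that this map is compatible with the projector and tensor-product structure. Concretely, writing a complex operator as $M=A+iB$ with $A,B$ real, I would define its realification $\tilde M=\bigl(\begin{smallmatrix}A&-B\\ B&A\end{smallmatrix}\bigr)$ acting on $S'_A=\RR^{2n}$, and set $P'_a=\tilde P_a$, $P'_b=\tilde P_b$. Since realification is an $\RR$-algebra homomorphism that turns Hermitian matrices into real symmetric ones, each $\tilde P_a$ is automatically a genuine real projector, the orthogonality and completeness relations among the $P_a$ of a single measurement are inherited, and $P'_a$ depends only on $P_a$ and $P'_b$ only on $P_b$, as the statement requires. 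The only remaining task is to build $|V'\rangle$ from $|V\rangle$ so that the two sides of Eq.~(\ref{expect}) coincide.

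Building $|V'\rangle$ is where the real work lies, because realification does \emph{not} commute with the tensor product: the real space $S'_A\otimes S'_B$ has dimension $4n^2$, twice the real dimension of $S_A\otimes S_B$. Equivalently, $S'_A\otimes S'_B$ carries two distinct ``multiplication by $i$'' operators, $J_A\otimes\one$ and $\one\otimes J_B$ (each squaring to $-\one$), which fail to agree on the whole space. I would therefore pass to the $(-1)$-eigenspace $W_-$ of the commuting involution $J_A\otimes J_B$; on $W_-$ the two operators coincide and define a single complex structure, and $\tilde P_a\otimes\tilde P_b$ preserves $W_-$ because it commutes with $J_A\otimes J_B$. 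The state is placed inside $W_-$ through the $\RR$-linear embedding $\iota$ determined on product vectors by
\begin{equation}
\iota(|u\rangle\otimes|w\rangle)=\tfrac{1}{\sqrt2}\bigl(\hat u\otimes\hat w-(J_A\hat u)\otimes(J_B\hat w)\bigr),
\label{iotadef}
\end{equation}
where $\hat u\in S'_A$ is the realification of $|u\rangle$, and one sets $|V'\rangle=\iota(|V\rangle)$. The antisymmetric combination is exactly what lands the image in $W_-$, and the prefactor $1/\sqrt2$ is what makes $\iota$ norm-preserving.

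With $\iota$ in hand I would establish two facts and then combine them. First, an isometry-type identity $\langle\iota(x),\iota(y)\rangle_\RR=\mathrm{Re}\,\langle x|y\rangle_\CC$, checked on product vectors using $\langle\hat u,\hat u'\rangle_\RR=\mathrm{Re}\langle u|u'\rangle$ and $\langle\hat u,J_A\hat u'\rangle_\RR=-\mathrm{Im}\langle u|u'\rangle$, and then extended by bilinearity. Second, the equivariance $(\tilde P_a\otimes\tilde P_b)\,\iota(x)=\iota\bigl((P_a\otimes P_b)x\bigr)$, which follows from the single-factor relation $\tilde P_a\hat u=\widehat{P_a u}$ together with the commutation of $\tilde P_a$ with $J_A$. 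Chaining these gives $\langle V'|\tilde P_a\otimes\tilde P_b|V'\rangle=\mathrm{Re}\,\langle V|P_a\otimes P_b|V\rangle$, and since $P_a\otimes P_b$ is Hermitian the right-hand side already equals $\langle V|P_a\otimes P_b|V\rangle$, which is precisely Eq.~(\ref{expect}).

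The main obstacle, and the step I would treat most carefully, is exactly the dimension and complex-structure mismatch just described: one cannot realify $|V\rangle$ coordinate-wise, since there is no single canonical imaginary unit on $S'_A\otimes S'_B$. The substance of the argument is selecting the correct half-dimensional subspace $W_-$ and the specific combination~(\ref{iotadef}) so that $\iota$ is simultaneously an isometry onto its image, intertwines the two sets of projectors, and respects the complex structure $J_A\otimes\one=\one\otimes J_B$ on $W_-$. Verifying these three compatibilities on the spanning set of product vectors is routine bilinear algebra once the map is written down; keeping the signs and the normalization mutually consistent across all three is the delicate point.
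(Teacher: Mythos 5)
Your proposal is correct and is essentially the paper's own construction: the projectors are realified entrywise exactly as in the paper (the Myrheim map), and your $|V'\rangle=\iota(|V\rangle)$ with the $1/\sqrt2$ prefactor is, up to a conjugation convention on Bob's side, the same $4n^2$-component doubled state the paper obtains by applying the $2\times2$-block rule to the coefficient matrix $V$ and renormalizing. The only difference is presentational: the paper verifies the identity via $\langle V|P_a\otimes P_b|V\rangle=\mathrm{Tr}(AVB^TV^{\dagger})$, multiplicativity of the realification on matrix products, and $\mathrm{Tr}(\mathrm{image})=2\,\mathrm{Re}\,\mathrm{Tr}$, whereas you make the dimension mismatch explicit through the complex structures $J_A\otimes\one$, $\one\otimes J_B$ and verify isometry and equivariance of the embedding directly -- a cleaner, coordinate-free account of the same argument.
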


\begin{proof}
Let us use a matrix representation. Let us choose orthonormal bases in each
component space, and let the basis in the product space be the basis
consisting of the products of the basis vectors of the component spaces.
Hence, we can write,
\begin{equation}
|V\rangle = \sum V_{ij}|v^A_i\rangle|v^B_j\rangle,
\label{V}
\end{equation}
and
\begin{align}
A_{ij} &=\langle v^A_i| P_a|v^A_j\rangle, \\
B_{ij} &=\langle v^B_i| P_b|v^B_j\rangle,
\end{align}
where the basis vectors
$\{|v^A_i\rangle\}_{i=1}^n$ and $\{|v^B_j\rangle\}_{j=1}^n$ span
respectively Alice and Bob's local state spaces.
This way the vectors of the product space will be represented by
matrices of two indices. Then the expectation value above can be
expressed as
\begin{displaymath}
\sum_{i,j,k,l}V^*_{ij}A_{ik}B_{jl}V_{kl}=Tr(A V B^T V^{\dagger})
\label{expectmat}
\end{displaymath}
where $A$, $B$ and $V$ are the matrix representations of
$P_a$ and $P_b$ and $|V\rangle$, respectively. The value of the expression
is a real number, as it gives the expectation value of a Hermitian operator
in the product space.

Let us consider the following mapping \cite{Myrheim}. Let us replace each component $v_i=v_i^R+iv_i^I$
of the $n$-dimensional complex vector with the two-element real block of $(v_i^R, v_i^I)$,
and each component $A_{ij}=A^R_{ij}+iA^I_{ij}$ of a two-index matrix with the $2\times2$ block of
\begin{displaymath}
\left(
\begin{array}{cc}
A_{ij}^R&-A^I_{ij}\\
A^I_{ij}&A^R_{ij}\\
\end{array}\right).
\end{displaymath}
One can prove that the image of the product of either a matrix and a vector, or
two matrices will be equal to the corresponding product of the images. For $n=1$
this is easy to show. For $n>1$ the multiplication in the $2n$-dimensional space may
be done block-by-block, yielding the correct result. The mapping also conserves the
linear combinations of both vectors and matrices. When transposing matrices one has to be
careful. The image of the transpose of a matrix will be the transpose of the image of the
complex conjugate of the matrix. The complex conjugation is needed to get the
$2\times2$ blocks right (they are not transposed in the image of the transpose).
Hermitic conjugation is preserved by the mapping.
It is also easy to see that the trace operation on the image will give a real
number, which is twice the real part of the value calculated for the original
complex matrix (in each block the real part of the diagonal matrix element
will occur twice, while the imaginary part will be off-diagonal).
Given these rules in hand it is clear that the image of a projector is
also a projector, the images of orthogonal projectors are orthogonal projectors,
and if matrices sum up to unity, their images will do so, too. Therefore, the images
of a set of measurement operators will satisfy the properties required.

Let $|V'\rangle$ be the vector in $S'_A\otimes S'_B$ whose matrix $V'$ is constructed
with the above rule for 2-index matrices from the matrix $V$ of $|V\rangle$, and then
multiplied by $1/\sqrt{2}$ to get it properly normalized. We note
that the mapping rule to be applied in the product space is not the same as
the one applied in the component spaces. That rule would actually give just $2n^2$
components instead of the $(2n)^2$ ones. Let there be the matrix of $P'_a$ and
$P'_b$, i.~e.\ $A'$ and $B'$ the image of $A$ and $B^*$, respectively.
Then $A' V' B'^T V'^\dagger$ will be the image of $(1/2)A V B^T V^\dagger$, the factor
of $1/2$ is occurring due to
the $1/\sqrt{2}$ normalization factor in the construction of $V'$ from $V$.
As the trace of $A V B^T V^\dagger$, which is the expectation value in the complex space, is real,
its value is one half of the trace of its image, i.e., it is equal to the trace
of $A' V' B'^T V'^\dagger$, which is the expectation value in the real space.

\end{proof}

Note that for an arbitrary mixed state $\rho=\sum\lambda_i |V_i\rangle\langle V_i|$ the
expectation value
$Tr(P_a P_b\rho)$ is the convex sum of the expectations~(\ref{expect}) with coefficients
$\lambda_i$, which entails the main result Lemma~\ref{lemma:main} we wanted to show.

Aside from its conceptual interest, we mention two interesting situations where this fact
may prove to be useful beyond justifying our numerical experience that real ququarts
could yield at least the same amount of violation as complex qubits.

On one hand, in the inequality presented by Bechmann-Pasquinucci and Gisin in
Ref.~\cite{BG03} having three and two
measurement outcomes per Alice and Bob, respectively, the maximum quantum violation
can be achieved with
projective measurements sharing a maximally entangled state of dimension $3$.
However, numerical evidence
suggests that using measurement settings which require real numbers, the optimum
quantum violation could not be reached. It arisen as a natural question \cite{AS12}
whether a higher value could be achieved by using only real numbers but allowing to
occupy larger Hilbert spaces. Our result gives the answer in negative for this
question regarding this particular Bell inequality and also prove conclusively
that all bipartite Bell inequalities can be maximally violated by quantum states
and measurement settings which need in an appropriate basis only real numbers. This
latter problem for the general multipartite case was posed by Gisin (see also problem 32,
fundamental questions number 11 in Ref.~\cite{open}).

On the other hand, in Ref.~\cite{NPA07} a hierarchy of conditions has been formulated
through a semidefinite program \cite{VB96}. This approach can be used for instance to
obtain upper bounds on the quantum violation of arbitrary Bell inequalities. In this
case, however the matrix $\Gamma$ in question, which should satisfy the positive
semidefinite constraint is in general Hermitian. Our results, however, entails that
this matrix needs to be in fact real valued, i.e., must be a symmetric matrix. This
stronger condition thus may provide us with a tighter upper bound on any bipartite
Bell inequality, than the one which originally required the weaker Hermitian condition.

Now let us illustrate with a simple example, consisting of a qubit at each party,
the method how to obtain the projection operators and the respective states from
the original complex valued ones. In this case the state of two qubits can be written
in an appropriate basis as
$|V\rangle =\alpha |v^a_1\rangle|v^b_1\rangle + \beta |v^a_2\rangle|v^b_2\rangle$,
where the $\alpha$ and $\beta$ Schmidt coefficients are non-negative numbers, their
square adding up to $1$. Thus the matrix $V$ in Eq.~(\ref{V}) takes the following
simple form,
$diag(\alpha,\beta)$ whereas a non-degenerate projector on the state space of Alice
and Bob can be written as
$P_{\nu}=(\one \pm \vec \nu \vec\sigma)/2,\;\;\nu\in a,b$. Applying the mapping rule,
discussed above, we obtain the following real valued $4\times4$ matrices,
$V=(1/\sqrt 2)diag(\alpha,\alpha,\beta,\beta)$, implying the entangled state
(with nonzero $\alpha$ and $\beta$) in the 4-dimensional state space,
$|V\rangle=(\alpha|00\rangle + \alpha|11\rangle + \beta|22\rangle + \beta|33\rangle)/\sqrt 2$
and the corresponding projection operators
$P'_{\nu}=(\one \pm \vec \nu \vec\sigma')/2,\;\;\nu\in a,b$, where
$\sigma'_x=\sigma_x\otimes\one$,
$\sigma'_y=-\sigma_y\otimes\sigma_y$ and $\sigma'_z=\sigma_z\otimes \one$.

\end{document}